\title{Utilizing hyperplanar substructures to perform efficient range-based WSN localization}
\author{Onur \c{C}a\u{g}{\i}r{\i}c{\i}}
\affil{Faculty of Informatics, Masaryk University Brno, Czech Republic}
\begin{document}
\thispagestyle{empty}
\maketitle

\begin{abstract}
	A wireless sensor network (WSN) consists of multiple wireless sensor nodes that communicate each other to fulfill a particular task.
	In this paper, we emphasize on the networks whose deployments admit lower dimensional substructures, such as collinear groups in 2D, or coplanar groups in 3D.
	When these groups are given as a part of the input, we describe an algorithm to utilize this information to perform a low-cost localization.
	
	In emergency situations such as fire, earthquake etc. inside a building, wireless sensor networks might be very crucial to provide critical information and help the rescue teams to move very quickly by decreasing their burden of exploring the environment.
	Thus, it is very important to develop a system that provides information quickly and without consuming too much energy.
	We observe that in these type of environments, sensor nodes tend to form \emph{hyperplanar groups}.
	A hyperplane is a subspace of one dimension less than its ambient space, and accordingly, a \emph{hyperplanar group} of sensor nodes is a group of nodes that sit on the same hyperplane.
	When we consider a floor of a building, the nodes can be deployed on the corridors to form collinear groups, and when we consider the whole building, the sensor nodes will sit on the floors to form coplanar groups.
	
	Therefore, we study range-based WSN localization problem in 3D environments that induce hyperplanar groupings.
	First, we show that it is an \NP-hard problem to obtain hyperplanar groups, even if we are given the equations of the hyperplanes. Then, we describe an algorithm that assumes each sensor node is aware of its hyperplanar group, and performs a localization when the sensor nodes are deployed in the corridors on each floor of a building. 
	In this case, grouping information allows us to localize a network, which cannot be localized by conventional range-based localization algorithms.
\end{abstract}

\section{Introduction} \label{sec:intro}

Recent  developments in technology pose a need for wireless sensor network technologies to be used broadly \cite{survey}. A \emph{wireless sensor node} is a small device that is equipped with a radio transmitter.
A \emph{wireless sensor network}, in short WSN, is the collection of many wireless sensor nodes.
Each node is able to communicate and measure its distance to another node within the radio's distance.
Independent of the application area, the location information of wireless sensors in a WSN is crucial to improve the quality of service.
The information of positions increases the quality of the applications such as geographical based queries.
When equipped with proper devices, a sensor node can measure its distance to another node in the network using the communication links.
For all types of applications, the information of positions can be appended to the data that is gathered from the sensor nodes, easing the traceability of the sensors, as in WSNs that work in geographical routing.

Range-based wireless sensor network localization problem \cite{laman,rangebased,survey,surveypal,gpslesslowcost} is estimating the positions of the wireless sensor nodes in a network by only using pairwise distances. Estimating the position of a node is called \emph{localizing} the node. In order to localize a node in 3D, we need four pairwise distances from four different localized nodes in general position.

Network localization problem is proved to be a hard problem \cite{laman,survey,complexityOfWSN} but still preferred over Global Positioning System (GPS) \cite{gps} because GPS is usually not applicable for small-scale networks and indoors \cite{gpslesslowcost,zhongthesis,dwrl,reducing,techniques}.

Generally, sensor nodes are assumed to have identical sensing ranges while dealing with range-based localization. Thus, the unit disk graph model is used to model the network. It should be noted that realizing UDG is also an \NP-hard problem, by reduction from {\sc Unit disk graphs recognition} \cite{udg,udgRecognition,unitdiskapprox}. Since range-based localization uses UDG model, one can easily see that it is also \NP-hard \cite{complexityOfWSN}.

In \cite{cagiricithesis}, it was observed, and  experimentally verified that knowing which nodes are in the same group increases the accuracy, and decreases the computational time, when the number of groups is much smaller than the number of sensors. This is because conventional distance-based localization methods, such as trilateration \cite{trilat} and quadrilateration \cite{cagiricithesis}, use distances from points (in our case sensors) in general position. However, in practice, objects lie on planar surfaces to form coplanar groups. Which means a large amount of computational power is spent on finding correct nodes to localize another node. 

In an effort to perform a more efficient localization, we can exploit the information on the hyperplanar substructures, if provided. That is, a function $h : S \to \{1, 2, \dots, k\}$, called \emph{hyperplanar grouping function}. Two sensor nodes $s_i$ and $s_j$ are said to be on the same hyperplanar group if, and only if, $h(i) = h(j)$. The task of finding hyperplanar grouping function in 2D from given UDG, and hyperplanes is called \emph{hyperplanar substructures recognition}. 
In case where hyperplanar substructures are realized, \emph{i.e.} the mapping $h$ is given, we are able to perform a lower dimensional localization for each hyperplanar group separately, and then localize the groups globally with respect to each other.

In this paper, we show that discovering the mapping function $h$ is an \NP-hard problem, even if the hyperplanes are given as a part of the input by their equations. Then, we describe an algorithm to perform an efficient localization in case where $h$ is given as input.
Afterwards, we describe an algorithm to utilize the grouping function, when available, and show that our algorithm is able to localize a network which cannot be localized by using only pairwise distances.

\subsection{Motivation}
The physical world we live in is a 3D environment. Therefore, many applications require wireless noeds to be localized in 3D.
Most of the scenarios in real-world WSN applications that need localization have usually deployments where the sensor nodes sit on planar regions to form sets of planar clusters.
Examples of these surfaces include structures seen in both indoor (\textit{e.g.} floors, doors, walls, tables etc.) and outdoor (\textit{e.g.} mountains, valleys, hills etc.) environments.
It is hence observed that a method for exploiting the information of structural information in 3D is very much needed in order to improve the quality of localization achieved.

In this paper, we particularly study the scenario where there is an emergency inside a building, such as fire, earthquake, hostage situation etc.
When the members of a search and rescue team have limited time, our work aims to utilize wireless sensor networks to release the burden of searching, and allowing team to go for rescuing.

Suppose that the wireless nodes are deployed onto the corridors of a building. 
Each corridor can be interpreted as a collinear group (i.e. hyperplanar group in 2D).
Moreover, since the building probably has multiple floors, each floor is a coplanar group (i.e. hyperplanar group in 3D).

In Figure~\ref{fig:deployment}, we give an example of a network deployed onto the corridors of a three-floor building.
The dots denote the wireless nodes inside the building, and colored rectangles denote three different floors.
A pair of nodes denoted by the same color sit on the same floor.

\begin{figure}
	\centering
	\begin{tikzpicture}
	\begin{axis}
	[ axis on top, view={-30}{-15}	]
	\addplot3[draw=none, fill=blue!50]
	coordinates{
		(0,0,0)
		(0,1,0)
		(1,1,0)
		(1,0,0)
	};
	\addplot3[draw=none, fill=yellow!50]
	coordinates{
		(0,0,0.8)
		(0,1,0.8)
		(1,1,0.8)
		(1,0,0.8)
	};
	\addplot3[draw=none, fill=red!20]
	coordinates{
		(0,0,1.6)
		(0,1,1.6)
		(1,1,1.6)
		(1,0,1.6)
	};
	\addplot3[scatter, only marks] file {./sensors.dat};
	\end{axis}
	\end{tikzpicture}
	\caption{Sensor nodes deployed onto the corridors of a building}
	\label{fig:deployment}
\end{figure}
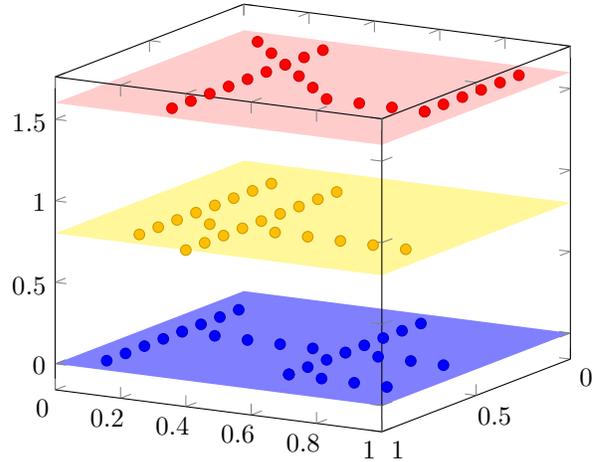

\subsection{Basic terminology}

In this section, we give the basic terminology that we use in the described algorithm.

\emph{Localizing} of a sensor network is assigning relative positions to each sensor node in the ambient space (in our case, 3D).
When we assign positions to each node, we obtain a \emph{point formation}. The point formation of a network is denoted by $\mathbb{F}$. 
It is a table that contains $2$ columns, and $n$ rows, where $n$ is the number of nodes in the network.
The first column contains the IDs of the nodes, and the second column contains each node's position in 3D Euclidean space.

A wireless sensor network is said to be \emph{localizable} in $\mathbb{R}^d$ when the corresponding distance graph $G = (V, E)$ is \emph{globally rigid}.
In other words, every embedding of the vertices in $V$ with respect to the pairwise Euclidean distances determined by the edge set $E$ is isometric in $\mathbb{R}^d$.
For more details on the localizability, we refer the reader to \cite{theory,understandinglocalizability,localizabilityanddistributed}.

Every sensor node in the given network is initially \emph{unlocalized}, i.e. it is not aware of its position.
When we fix the position of a node, we mark it as \emph{localized}.
To fix the position, we need to reduce its \emph{degrees of freedom} to zero by obtaining $d+1$ Euclidean distances from its localized neighbors in general position.
Intuitively, one expects a localizable network to be dense enough.
Our algorithm, however, is able to perform localization in very sparse graphs if the grouping function is provided.

Observe that each hyperplanar group is an induced subgraph of the network graph $G = (V, E)$, and each pair $(u,v)$ of vertices in that induced subgraph satisfies $h(u) = h(v)$.
The set of all vertices in a particular hyperplanar group can be retrieved by taking the inverse of $h$. 
In other words, $h^{-1}(i)$ gives us the set of all vertices in the hyperplanar group $i$.

We refer to the hyperplanar groups in 2D and 3D as \emph{collinear groups}, and \emph{coplanar groups} respectively.
An edge is called \emph{interlinear} (respectively, \emph{interplanar}) edge, if it corresponds to a connection between two nodes from different collinear (respectively, \emph{coplanar}) groups.
Otherwise, it is a collinear (respectively, \emph{coplanar}) edge.

\subsection{Our contributions}	

Our main contributions are listed as follows.

\begin{itemize}
	\item We exploit the fact that the sensor nodes tend to form hyperplanar groups in real-world scenarios.
	\item We prove that extracting the grouping information is a hard task.
	\item We give an efficient and sustainable localization algorithm to utilize provided grouping function. Our algorithm does not require a dense network graph.
\end{itemize}

\section{Hyperplanar substructures recognition} \label{sec:reduction}
In this section, we formally define the problem of recognizing hyperplanar substructures, and then prove that the problem is \NP-hard. 

\subsection{Problem definition}
Consider a set $\mathcal{H} = \{\ell_1, \ell_2, \dots, \ell_k\}$ of $k$ hyperlanes in $\mathbb{R}^d$, given by their equations $\ell_i : a_{i1}x_1 + a_{i2}x_2 + \dots + a_{in}x_d = b$, and a unit disk graph $G = (V,E)$, where $V = \{1,2,\dots,n\}$.
The task is to find a mapping $h : V \rightarrow \{1,\dots,k\}$ such that there is a realization of $G$ in which $v \in \ell_{f(v)}$ for each $v \in V$.

\subsection{NP-hardness reduction for 2D environment} \label{sec:hard2d}
We first show that Hyperplanar substructures recognition is \NP-hard when the sensor nodes are in 2D, and then move onto higher dimensions.

\begin{lem} \label{lem:hard2d}
	Hyperplanar substructures recognition is \NP-hard in 2D.
\end{lem}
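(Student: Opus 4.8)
The plan is to reduce from a standard \NP-complete problem; I would use \textsc{Planar 3-Sat}, whose bounded-degree, planar structure can be laid out as a line arrangement without blowing up the number of lines. Given a formula $\varphi$ I would construct, in polynomial time, a set $\mathcal{H}$ of lines and a unit disk graph $G$ so that $G$ has a unit-disk realization with every vertex placed on its assigned line if and only if $\varphi$ is satisfiable. One might instead hope to reduce directly from {\sc Unit disk graphs recognition} (cited above as \NP-hard), but realizations of unit disk graphs can require exponentially many bits of precision, so confining the vertices to a polynomial-size arrangement of lines cannot faithfully simulate an arbitrary planar realization. This pushes me toward a controlled, logic-style reduction in which I dictate all coordinates.

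Before designing gadgets it is worth recording the obstruction that fixes their shape. If the lines of $\mathcal{H}$ are pairwise separated by distance more than $1$ (for instance, parallel lines spaced $2$ apart), then every edge of $G$ forces its two endpoints onto a common line, so each connected component must lie on a single line and hence be a unit interval graph; conversely, since a line is unbounded and all cross-component pairs are non-edges, the components can always be placed far apart along the lines. Thus in the well-separated case the instance is a \textsc{yes}-instance exactly when every component is a unit interval graph, which is decidable in polynomial time. Consequently the hardness must arise from \emph{intersecting} lines, where two points on distinct lines can be forced close to one another near a crossing, and my gadgets will live around such crossings.

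Concretely I would use three ingredients. A \emph{variable gadget} for $x$ contains a vertex $t_x$ with a rigid attached subgraph whose only valid realizations place $t_x$ on one of two crossing lines $\ell_x^{\mathrm{true}}$, $\ell_x^{\mathrm{false}}$, the chosen line encoding the truth value. \emph{Wire gadgets} are chains of vertices anchored by edges onto tracks that copy this choice toward the clauses while respecting the planar layout of $\varphi$. A \emph{clause gadget} sits at the meeting of its three incoming literal lines and is built so that a consistent placement of its vertices on the given lines exists precisely when at least one incoming literal carries its \emph{true} position. I would then prove both implications: a satisfying assignment yields an explicit realization by placing each $t_x$ on the line of its value and filling in wires and clauses, while any valid realization induces a satisfying assignment by reading off which line each $t_x$ occupies.

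The main obstacle is the gadget design itself: I must choose the slopes and offsets of the lines, and the edge/non-edge pattern of $G$, so that every ``forcing'' is genuinely enforced by unit-disk constraints alone (distance $\le 1$ versus $>1$), with all points confined to the fixed lines and with no unintended long-range adjacencies appearing between distant gadgets. A secondary but essential obstacle is bit-complexity: the line equations and the distances witnessing the intended behavior must be expressible with polynomially many bits, so that the instance is polynomial in $|\varphi|$ and the reduction runs in polynomial time. I expect the completeness direction, namely ruling out spurious realizations that satisfy the unit-disk constraints without corresponding to any assignment, to be the most delicate part, requiring a careful case analysis of where each gadget vertex may lie along its line.
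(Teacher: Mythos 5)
Your proposal is a strategy, not a proof: the entire content of an \NP-hardness argument of this kind lives in the gadgets, and you never construct them. You name three ingredients (variable, wire, clause gadgets) and state the properties they must have --- ``only valid realizations place $t_x$ on one of two crossing lines,'' ``a consistent placement exists precisely when at least one literal is true'' --- but these are the claims that need proving, and they can only be proved once concrete line equations, vertex coordinates, and edge/non-edge patterns are written down. Your own closing paragraph concedes this: the ``main obstacle'' (making every forcing follow from unit-disk constraints alone, with no unintended adjacencies) and the ``most delicate part'' (ruling out spurious realizations) are exactly the missing proof, not residual details. As it stands, nothing in the proposal could be checked for correctness, because there is no construction to check. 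A further unaddressed difficulty is that lines are unbounded: any two non-parallel lines in your arrangement cross somewhere, so every pair of gadget tracks creates a potential interaction site far from the gadgets you intend, and the planarity of the formula does not by itself prevent this.

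To your credit, two preliminary points are genuinely good: the observation that pairwise well-separated (e.g., parallel, distance $>1$) lines make the problem polynomial --- so hardness must be manufactured at crossings or near-coincidences --- correctly identifies where the difficulty lives, and the worry about bit-complexity of coordinates is legitimate. For comparison, the paper takes a different route: it reduces from 2-coloring of 3-uniform hypergraphs, using an explicit arrangement of vertical ``vertex lines'' (spaced between 2 and 3 units apart) crossed by horizontal ``edge lines'' in two mirrored families, with triangular ``flag'' subgraphs attached to the vertex lines; the reflection ambiguity of each flag (which of two mirrored edge lines its apex lies on) encodes the color choice, and the spacing constraints force facing flags to be adjacent, which propagates consistency. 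Whatever the shortcomings of that writeup, it illustrates the level of concreteness required: named lines with quantitative separations, an explicit vertex/edge layout, and a stated mechanism by which geometry forces the combinatorial choice. Your proposal stops exactly where that work begins.
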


\begin{proof}
	We give a reduction of this problem from 2-coloring of 3-uniform hypergraphs. 
	The polynomial-time reduction from a given graph to is as follows.
	Let $H = (X,F)$ be an instance of a 3-uniform hypergraph. $H$ is said to be 2-colorable if there is an assignment of two colors with no monochromatic edge.
	In the following part, when we write ``uniformly distributed,'' we mean the Euclidean distance between two sensor nodes is less than one unit.
	We construct the gadget to perform the reduction as follows. 
	\begin{enumerate}
		\item We have $|X|$ vertical, and $2|F|+2$ horizontal straight lines. Each of them are parallel to either $x$-axis or $y$-axis.
		
		\item All the vertical lines are colored black, and spaced equally. These lines are called \emph{vertex lines}, that represent the vertex set of $H$. The Euclidean distance between each pair of vertical line is greater than two units, and less than three units.
		
		\item One of the horizontal lines, called the \emph{main line}, is given by the equation $y = 0$. This line is colored black. There are sufficient number of uniformly distributed wireless sensor nodes on the main line, including the intersection points of the main line and the vertex lines.
		
		\item In order to fix the configuration, there exists a \emph{support line}, which is placed horizontally, and sufficiently close the main line. This line contains only two vertices that forms a $K_4$ with two vertices from the main line.
		
		\item Beginning from one unit above and below the main line, there are uniformly distributed sensor nodes on the vertex lines.
		
		\item Remaining $2|F|$ horizontal lines are distributed equally above and below the main line. The ones above the main line are colored red. The rest of the horizontal lines are colored blue. All the colored lines are sufficiently far from the main line. Each $r_i$ and $b_i$ represent two mirrored copies of an edge $i$ of $H$, and we refer to them as \emph{edge lines}. We, later on, will use the edge lines to determine the color of an edge in $H$. 
		
		\item The Euclidean distance between each consecutive pairs of horizontal lines with the same color are equal, and greater than one unit.
		
		\item There exist wireless nodes on the edge lines, each neighboring exactly two sensor nodes from a vertex line, forming a triangle. These triangles are referred to as \emph{flags}, and the vertices are referred to as \emph{flag vertices} in the following part of the proof.
		
	\end{enumerate}
	
	In Figure~\ref{fig:nphardness}, we draw the described configuration. The support line is indicated with dashes. The main line is just below the support line. There are six vertex lines, and six edge lines. That means in the hypergraph $H = (X, F)$, $|X| = 6$ and $|F| = 3$. Since every wireless node on the main line and the vertex lines are distributed uniformly, and the Euclidean distance between parallel lines is always greater than one unit, it is easy to fix the groups of the nodes on the main line and the vertex lines. Moreover, notice the flags on the leftmost and the rightmost vertex lines. These flags help us to fix the flag vertices those contain them.
	
	All fixed nodes are indicated with white circles. The remaining nodes, flag nodes, are indicated with black circles. Now, our task is to find which flag nodes form a collinear group on an edge line.
	
	The edge lines are two symmetrical copies of the edges in $F$ with different colors with respect to the main line. We denote the two copies of an edge $f \in F$ as $b_f$ and $r_f$, for blue and red colors respectively.
	
	If a vertex line that represents $x \in X$ has its missing flag on an edge line $b_f$ or $r_f$, then this indicates that the vertex $x \in X$ is covered by the edge $f \in F$ in the hypergraph $H$.
	In that case, the color of $x$ is determined by flipping the configuration (either upside down or left to right) on the vertex line that corresponds to $x$.
	As demonstrated in Figure~\ref{fig:colorfix}, the colors of some vertices in $H$ might be determined by the vertices in $V$. In this particular example, $u,v \in V$ are adjacent vertices, and $w$ is neighboring none of them. Therefore, neither $u$ nor $v$ can flip horizontally, and they must flip vertically together.
	This means that the color of the vertices $x,y \in X$ covered by the edge $f \in F$ in $H$ is determined by the vertices $u, v, w \in V$.
	
	On each vertex line, there are five flags. If an edge has a ``missing'' flag on one side then that means that edge covers the vertex, and is colored according to the color of the edge line which is missing the flag.
	Remember that the Euclidean distance between consecutive vertex lines is less than $3$ units. Hence, if two flag nodes are facing to each other, they must be neighbors in our configuration.
	Given this information, there can be only one mapping for hyperplanar grouping function for the input UDG $G = (V,E)$, and only one corresponding coloring of its corresponding hypergprah $H = (X,F)$.
	
	Therefore, an algorithm to solve the problem in polynomial time can also be used to solve 2-coloring of 3-uniform hypergraphs in polynomial time. 	
\end{proof}

\begin{figure}[htbp]
	\centering
	\begin{tikzpicture}[scale=0.25]
	
	\node at (-14, 15) {\footnotesize fixing lines};
	\draw[dashed,->] (-14,14.5)--(-10,12);
	\draw[dashed,->] (-14,14.5)--(10,12);
	\node at (-15,2) {\footnotesize main line};
	\node at (-15,1.2) {\footnotesize and support line};
	\tikzstyle{every node}=[draw, fill=white, shape=circle, minimum size=3pt,inner sep=0pt];
	\draw[thick] (-16,0) -- (13,0);
	\draw[dashed] (-16,0.5) -- (13,0.5);
	\draw (-15,0)--(-16.5,0.5)--(-15.5,0.5) -- (-14,0) -- (-16.5,0.5);
	\draw (-15,0)--(-15.5,0.5);
	\node at (-16.5,0.5) {};
	\node at (-15.5,0.5) {};

	\foreach \i in {8.5, 6.5, 4.5}
	{
		\draw[color=red] (-12,\i) -- (12,\i);
		\draw (-11, \i) -- (-10, \i + .5) -- (-9, \i) -- (-10, \i - .5) -- (-11, \i);
		\draw (11, \i) -- (10, \i + .5) -- (9, \i) -- (10, \i - .5) -- (11, \i);
		\node at (-11, \i) {};
		\node at (-9, \i) {};
		\node at (11, \i) {};
		\node at (9, \i) {};
	}

	\foreach \i in {-4.5, -6.5, -8.5}
	{
		\draw[color=blue] (-12,\i) -- (12,\i);
		\draw (-11, \i) -- (-10, \i + .5) -- (-9, \i) -- (-10, \i - .5) -- (-11, \i);
		\draw (11, \i) -- (10, \i + .5) -- (9, \i) -- (10, \i - .5) -- (11, \i);
		\node at (11, \i) {};
		\node at (9, \i) {};
		\node at (-11, \i) {};
		\node at (-9, \i) {};
	}
	
	\node[draw=none] at (-13,8.5) {$r_3$};
	\node[draw=none] at (-13,6.5) {$r_2$};
	\node[draw=none] at (-13,4.5) {$r_1$};
	\node[draw=none] at (-13,-8.5) {$b_3$};
	\node[draw=none] at (-13,-6.5) {$b_2$};
	\node[draw=none] at (-13,-4.5) {$b_1$};

	\foreach \i in {-10,-6,-2,2,6,10} 
	{
		\draw (\i,12) -- (\i,-12);
		\foreach \j in {2,...,10}
		{
			\node (\i\j) at (\i, \j) {};
		}
		\foreach \j in {-2,...,-10}
		{
			\node (\i\j) at (\i, \j) {};
		}
	}
	
	\foreach \i in {-15,...,12}
	\node at (\i, 0) {};

	\node[draw=none] at (-6,-13) {$a$};
	\node[draw=none] at (-2,-13) {$b$};
	\node[draw=none] at (2,-13) {$c$};
	\node[draw=none] at (6,-13) {$d$};

	\tikzstyle{every node}=[draw, fill=black, shape=circle, minimum size=2.5pt,inner sep=0pt];
	\node (r1) at (-5, 8.5) {};
	\draw (r1)--(-69);
	\draw (r1)--(-68);
	
	\node (r2) at (-3, 8.5) {};
	\draw (r2)--(-29);
	\draw (r2)--(-28);
	
	\node (r3) at (-3, 6.5) {};
	\draw (r3)--(-27);
	\draw (r3)--(-26);
	
	\node (r4) at (-3, 4.5) {};
	\draw (r4)--(-25);
	\draw (r4)--(-24);
	
	\node (r5) at (1, 6.5) {};
	\draw (r5)--(27);
	\draw (r5)--(26);
	
	\node (r6) at (5, 4.5) {};
	\draw (r6)--(65);
	\draw (r6)--(64);

	\node (b1) at (-5, -8.5) {};
	\draw (b1)--(-6-9);
	\draw (b1)--(-6-8);
	
	\node (b2) at (-5, -6.5) {};
	\draw (b2)--(-6-7);
	\draw (b2)--(-6-6);
	
	\node (b3) at (-5, -4.5) {};
	\draw (b3)--(-6-5);
	\draw (b3)--(-6-4);
	
	\node (b4) at (3, -8.5) {};
	\draw (b4)--(2-9);
	\draw (b4)--(2-8);
	
	\node (b5) at (3, -6.5) {};
	\draw (b5)--(2-7);
	\draw (b5)--(2-6);
	
	\node (b6) at (3, -4.5) {};
	\draw (b6)--(2-5);
	\draw (b6)--(2-4);
	
	\node (b7) at (5, -8.5) {};
	\draw (b7)--(6-9);
	\draw (b7)--(6-8);
	
	\node (b8) at (5, -6.5) {};
	\draw (b8)--(6-7);
	\draw (b8)--(6-6);
	
	\node (b9) at (5, -4.5) {};
	\draw (b9)--(6-5);
	\draw (b9)--(6-4);

	\end{tikzpicture}
	\caption{Gadget to prove NP-hardness}
	\label{fig:nphardness}
\end{figure}
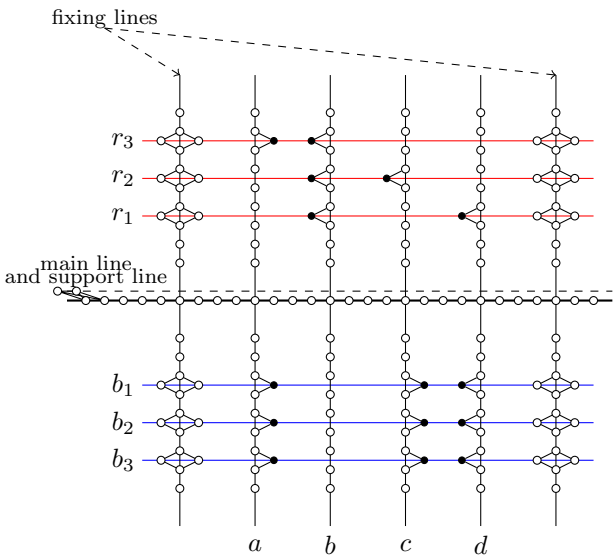

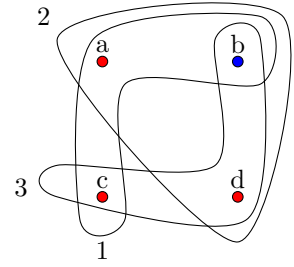
\begin{figure}[htbp]
	\centering
	\begin{tikzpicture}[scale=0.6]

	\tikzstyle{every node}=[draw, shape=circle, minimum size=4pt,inner sep=0pt];
	\node[label=a, fill=red] at (0,3) {};
	\node[label=b, fill=blue] at (3,3) {};
	\node[label=c, fill=red] at (0,0) {};
	\node[label=d, fill=red] at (3,0) {};
	\node[draw=none] at (0,-1.2) {1};
	\node[draw=none] at (-1.3,4) {2};
	\node[draw=none] at (-1.8,0.2) {3};
	\draw plot [smooth cycle] coordinates {(-0.2,3.5) (3.5,4) (3.5, 2.5) (0.5,2.5) (0.5, -0.5) (-0.5, -0.5)};
	\draw plot [smooth cycle] coordinates {(2.5,3.5)  (2.5,0.7) (-1,0.7) (-1,0) (3.2, -0.5) (3.5, 3.5)};
	\draw plot [smooth cycle] coordinates {(-1,3.5) (3,-1) (4,4)};

	\end{tikzpicture}
	\caption{Corresponding hypergraph of the configuration given in Figure~\ref{fig:nphardness}}
	\label{fig:hypergraph}
\end{figure}

\begin{figure}[htbp]
	\centering
	\begin{tikzpicture}[scale=0.2]
	
	\tikzstyle{every node}=[draw, fill=white, shape=circle, minimum size=4pt,inner sep=2pt];
	\draw (6,-12)--(6,12);
	\draw (0,-12)--(0,12);
	\draw (-6,-12)--(-6,12);
	
	\node[draw=none] at (0,-13) {$x$};		
	\node[draw=none] at (6,-13) {$y$};

	\draw[dotted] (10, -4)--(10, -7);
	\draw[dotted] (10, 4)--(10, 7);
	\draw[dotted] (-10, -4)--(-10, -7);
	\draw[dotted] (-10, 4)--(-10, 7);
	
	\node[draw=none] at (-14,4) {$r_f$};
	\node[draw=none] at (-14,-4) {$b_f$};
	
	\draw[color=red] (-12,4)--(12,4);
	
	\draw[color=red] (-12,7)--(12,7);
	\draw[color=red] (-12,8)--(12,8);
	\draw[color=red] (-12,9)--(12,9);
	\draw[color=red] (-12,10)--(12,10);
	
	\draw[color=blue] (-12,-4)--(12,-4);
	
	\draw[color=blue] (-12,-7)--(12,-7);
	\draw[color=blue] (-12,-8)--(12,-8);
	\draw[color=blue] (-12,-9)--(12,-9);
	\draw[color=blue] (-12,-10)--(12,-10);
	
	\node (w) at (-4.5,-4) {$w$};
	\node (u) at (1.5,-4) {$u$};
	\node (v) at (4.5,-4) {$v$};
	
	\node (w1) at (-6, -2) {};
	\node (w2) at (-6, -6) {};
	\node (u1) at (0, -2) {};
	\node (u2) at (0, -6) {};
	\node (v1) at (6, -2) {};
	\node (v2) at (6, -6) {};
	
	\draw (u1)--(u)--(u2);
	\draw (v1)--(v)--(v2);
	\draw (w1)--(w)--(w2);
	\end{tikzpicture}
	\caption{The color of the edge that covers vertex $u$ is forced by vertices $v$ and $w$.}
	\label{fig:colorfix}
\end{figure}
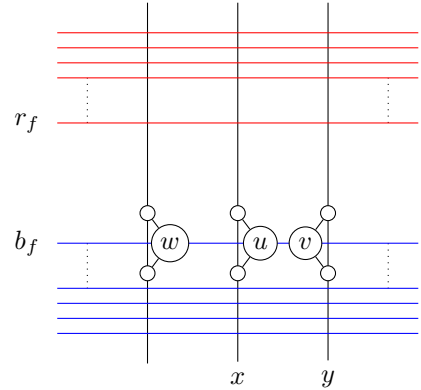

\subsection{NP-hardness reduction for every dimension} \label{sec:hard3d}
In Section~\ref{sec:hard2d}, we have proved that the mentioned problem is \NP-hard in 2D by a reduction from 2-coloring of 3-uniform hypergraphs. Now, let us give the proof for the case where the wireless sensors are deployed in 3D.
\begin{lem} \label{lem:hard3d}
	Hyperplanar substructures recognition is \NP-hard in 3D.
\end{lem}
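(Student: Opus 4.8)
The plan is to reduce from the 2D version of the problem, which is \NP-hard by Lemma~\ref{lem:hard2d}, rather than rebuilding the entire hypergraph gadget in three dimensions. Given a 2D instance consisting of lines $\ell_1,\dots,\ell_k$ with $\ell_i : a_{i1}x_1 + a_{i2}x_2 = b_i$ together with the unit disk graph $G = (V,E)$, I would construct a 3D instance by \emph{vertically extruding} each line into a plane: the line $\ell_i$ becomes the plane $P_i : a_{i1}x_1 + a_{i2}x_2 + 0\cdot x_3 = b_i$, and $G$ is carried over unchanged. Each $P_i$ is parallel to the $x_3$-axis, and its trace on the plane $x_3 = 0$ is exactly $\ell_i$. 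This single geometric fact is what the whole argument rests on, and the construction is plainly polynomial.

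For the forward direction I would take a valid 2D grouping $h$ with a planar realization $q : V \to \mathbb{R}^2$ placing each $v$ on $\ell_{h(v)}$, and lift it via $q(v) \mapsto (q(v),0)$. This places each $v$ on $P_{h(v)}$, preserves every pairwise distance exactly, and hence certifies the same $h$ for the 3D instance.

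The backward direction is the crux. Starting from a valid grouping $h$ with a 3D realization $p$, the orthogonal projection $\pi(x_1,x_2,x_3) = (x_1,x_2)$ sends each $p(v) \in P_{h(v)}$ to a point of $\ell_{h(v)}$, so $h$ is immediately consistent with the lines, and since projection can only shrink distances, every edge of $G$ remains within one unit. The one thing that can fail is that a \emph{non-edge} whose endpoints were pulled apart using the third coordinate may collapse to distance at most one under $\pi$, breaking the unit disk property; this is exactly where the extra dimension could let a solver cheat. I would neutralise it by augmenting the extruded gadget with a rigid coplanarity scaffold: a constant-size frame of anchor nodes, pinned by a $K_4$ and assigned to an auxiliary horizontal plane, attached so that any realization respecting the distance constraints is forced to lie in a single plane. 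Once the realization is coplanar, that plane meets each vertical $P_i$ in a line arrangement combinatorially identical to the original one, and restricting $p$ to it recovers a genuine 2D realization of $G$ witnessing $h$, with all non-edges intact.

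Finally, I would note that the same extrusion works verbatim from $\mathbb{R}^d$ to $\mathbb{R}^{d+1}$, since a hyperplane in $\mathbb{R}^d$ lifts to the hyperplane of $\mathbb{R}^{d+1}$ obtained by appending a zero coefficient for the new coordinate, and the scaffold again forces the realization back into a copy of $\mathbb{R}^d$; composing these lifts from the base case of Lemma~\ref{lem:hard2d} yields \NP-hardness in every dimension $d \geq 2$, in particular the stated 3D case. I expect the delicate part to be making the coplanarity scaffold robust under the unit disk model, namely choosing anchor distances that genuinely force planarity while keeping every intended and forbidden adjacency correct, rather than the extrusion bookkeeping, which is routine.
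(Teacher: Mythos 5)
Your construction starts exactly as the paper's does---each line $\ell_i$ is extruded into the vertical plane $P_i$ obtained by appending a zero coefficient for the new coordinate---and you correctly identify the crux: a 3D realization can exploit the free $z$-coordinate, so the backward direction must somehow force the realization back into a copy of the plane. The genuine gap is the device you propose for this. A \emph{constant-size} coplanarity scaffold cannot enforce coplanarity of the whole point set under the unit disk model, because the constraints are purely local: an edge forces distance at most one, a non-edge forces distance greater than one, and any vertex outside the scaffold's unit-distance neighborhood is completely unaffected by where the scaffold sits. The gadget of Lemma~\ref{lem:hard2d} spans a region many units across (vertex lines are $2$--$3$ units apart, edge lines farther still), so almost all of its vertices never interact with an $O(1)$-size frame; in particular, a flag vertex attached by only two edges to a vertex plane retains a continuum of admissible heights inside its own vertical plane, which is precisely what breaks the 2D argument that two facing flags must be adjacent. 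Two further problems: the grouping $h$ is the \emph{output} of the problem, not part of the input, so you cannot ``assign'' the anchor nodes to an auxiliary horizontal plane---and once such a horizontal plane is added to $\mathcal{H}$, nothing prevents gadget vertices from being legally mapped onto it, creating spurious solutions; and even if all vertices were forced onto one common plane, that plane need not be horizontal, and a tilted plane meets the vertical planes $P_i$ in an arrangement that is only an affine, distance-distorting image of the original, so unit disk adjacencies on it do not correspond to those of the 2D instance.

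The paper avoids all of this with a modification that touches \emph{every} vertex rather than $O(1)$ of them: each node of the 2D gadget is duplicated with the same $x,y$-coordinates and $z$-coordinate equal to $1$, so that every node is adjacent exactly to its own copy and to no other copy, the triangular flags become triangular prisms, and the 2D gadget argument is then re-run directly on this doubled configuration, rather than invoking the 2D problem as a black box. In other words, the paper's ``scaffold'' is the full duplicate layer, which interacts with every vertex and removes the vertical slack uniformly. Your closing observation that the lift from $\mathbb{R}^d$ to $\mathbb{R}^{d+1}$ iterates is the same idea as Theorem~\ref{thm:hardAllDimensions}, but as stated it inherits the same gap, since each lift again relies on the unsupported coplanarity-forcing step.
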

\begin{proof}
	This reduction is hyperplanar substructures realization in 2D. 
	Consider the gadget given in Figure~\ref{fig:nphardness} from Section~\ref{sec:hard2d}.
	Suppose that each line is a plane seen from $xy$-plane in 3D. That is, each wireless node in Figure~\ref{fig:nphardness} has $z$-coordinate zero.
	In order to obtain a deployment in 3D, we just make a copy for each wireless node with the same $x$ and $y$-coordinates, and let their $z$-coordinates be equal to $1$.
	
	From this point on, it is trivial to see that each wireless node is connected to its copy in 3D, and no other from the copies.
	Thus, the triangle flags become triangle prisms.
	And we can easily apply the proof given in Section~\ref{fig:nphardness}. 
\end{proof}

Now that we have proved that proof given in Lemma~\ref{lem:hard2d} can be extended to 3D by Lemma~\ref{lem:hard3d}, we can use the same method and extend the claim to higher dimensions.
\begin{thm} \label{thm:hardAllDimensions}
	Hyperplanar substructures recognition is \NP-hard in every dimension.
\end{thm}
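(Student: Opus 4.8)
The plan is to prove the statement by induction on the ambient dimension $d \geq 2$, bootstrapping from the two lemmas already established. The base case $d = 2$ is exactly Lemma~\ref{lem:hard2d}, and the passage from $d=2$ to $d=3$ is exactly Lemma~\ref{lem:hard3d}; what remains is to observe that the duplication argument of Lemma~\ref{lem:hard3d} is dimension-agnostic and may be iterated. So I would fix the inductive hypothesis that Hyperplanar substructures recognition is \NP-hard in $\mathbb{R}^d$, and exhibit a polynomial-time reduction from the $d$-dimensional problem to the $(d+1)$-dimensional one.

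For the inductive step I would take an arbitrary instance $(G,\mathcal{H})$ in $\mathbb{R}^d$, embed every node of $G$ in the hyperplane $x_{d+1}=0$ of $\mathbb{R}^{d+1}$ (leaving its first $d$ coordinates untouched), and adjoin, for each node $v$ at position $(\mathbf{p}_v,0)$, a fresh copy $v'$ at $(\mathbf{p}_v,1)$ together with the edge $vv'$. Each hyperplane $\ell_i : \sum_{j=1}^{d} a_{ij}x_j = b_i$ of $\mathcal{H}$ is reused verbatim, now read as $\sum_{j=1}^{d} a_{ij}x_j + 0\cdot x_{d+1} = b_i$, which defines a genuine $d$-dimensional hyperplane of $\mathbb{R}^{d+1}$ containing both the original copy of $\ell_i$ and its vertical translate. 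The reduction clearly runs in polynomial time: it doubles the vertex count and appends a single zero coefficient to each of the $k$ hyperplane equations.

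The correspondence to verify is that the lifted instance admits a valid grouping if and only if the original does. The forward direction is immediate: any realization of $G$ in $\mathbb{R}^d$ lifts coordinatewise to a realization in $\mathbb{R}^{d+1}$ in which $v$ and $v'$ share the group of $v$, since both satisfy the same unextended equation. The metric facts that must be checked for the unit-disk structure to behave are that $\|(\mathbf{p}_v,0)-(\mathbf{p}_v,1)\| = 1$, so every node is adjacent to its own copy, while for $u \neq v$ the vertical offset gives $\|(\mathbf{p}_u,0)-(\mathbf{p}_v,1)\| = \sqrt{\|\mathbf{p}_u-\mathbf{p}_v\|^2 + 1} > 1$, so no node is adjacent to a foreign copy; the copy layer therefore reproduces the adjacencies of $G$ exactly and the gadget's flags become prisms rather than collapsing.

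The main obstacle is the backward direction: one must argue that the duplication introduces no spurious groupings, i.e.\ that every valid grouping of the $(d+1)$-dimensional instance descends to a valid grouping of the $d$-dimensional one. Here I would exploit the rigidity imposed by the vertical edges — each pair $\{v,v'\}$ is forced to lie on a common lifted hyperplane, so a consistent grouping must be constant along the $x_{d+1}$-direction, and its restriction to the $x_{d+1}=0$ layer is precisely a valid grouping of $(G,\mathcal{H})$. Combining the base case with the iterated step then yields \NP-hardness in every dimension $d \geq 2$.
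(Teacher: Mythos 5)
Your overall strategy --- base case Lemma~\ref{lem:hard2d}, inductive step by the duplication trick of Lemma~\ref{lem:hard3d} --- mirrors the paper's proof, and your forward direction (the metric check that the lifted point set induces exactly the prism $G\,\square\,K_2$: vertical matching edges, copy layer reproducing $G$, no cross edges between distinct columns) is fine. The genuine gap is in your backward direction, and it is twofold. First, the claim that ``each pair $\{v,v'\}$ is forced to lie on a common lifted hyperplane'' is unjustified and false in general: the edge $vv'$ only forces $\|\varphi(v)-\varphi(v')\|\leq 1$ in a realization $\varphi$, and since every lifted hyperplane $\ell_i\times\mathbb{R}$ is vertical, any two hyperplanes of $\mathcal{H}$ that intersect (or pass within unit distance of each other) admit realizations with $h(v)\neq h(v')$: place $v$ on the intersection line and $v'$ one unit directly above it, on the other hyperplane. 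So a valid grouping of the lifted instance need not be constant along the $x_{d+1}$-direction. Second, even granting constancy, ``its restriction to the $x_{d+1}=0$ layer is precisely a valid grouping of $(G,\mathcal{H})$'' does not follow: the witnessing realization lives in $\mathbb{R}^{d+1}$, and the bottom-layer vertices need not have last coordinate $0$ --- each may sit at an arbitrary height on its vertical hyperplane $\ell_i\times\mathbb{R}$. To produce the realization in $\mathbb{R}^d$ that the original instance demands, you must project out the last coordinate, and projection only shrinks distances: edges survive, but two non-adjacent vertices at distance $>1$ can project to distance $\leq 1$, destroying the unit-disk property. So realizability of the lifted instance in $\mathbb{R}^{d+1}$ is never shown to imply realizability of the original instance in $\mathbb{R}^d$.

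This matters because your induction quantifies over \emph{arbitrary} $d$-dimensional instances, so you owe soundness of the lift for all of them --- a strictly stronger obligation than what the paper takes on. The paper's proof lifts only the specific 2-colorability gadget of Lemma~\ref{lem:hard2d}, so (to the extent it argues correctness at all) it only needs the equivalence for that one rigid configuration, whose fixed substructure of densely populated main and vertex lines pins the realization down; it never claims the lift is sound instance-to-instance. To repair your argument you would either restrict the induction to lifted gadget instances and invoke that rigidity explicitly, or supply a genuine flattening argument converting any realization of $G\,\square\,K_2$ on the vertical hyperplanes into a realization of $G$ on the original hyperplanes; as it stands, that step is missing and its stated justification is incorrect.
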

\begin{proof}
	We already know that the problem is \NP-hard in 2D by Lemma~\ref{lem:hard2d}, and the hardness reduction can be extended to 3D by Lemma~\ref{lem:hard3d}.
	In this proof, we show that the very same reduction can be used to prove the claim for 4D, and recursively, to higher dimensions.
	
	In 2D, every node has 2 components, namely $x$ and $y$. If we include one one component, $z$, and assign $1$ to all third components of the sensors, Lemma~\ref{lem:hard2d} remains true.
	
	Every component, in the context of this paper, adds an extra dimension to the nodes.
	Thus, for higher dimensions, we are able to set the new component's value to $1$. The new copies of the sensor nodes will not be neighbors of the older ones. Also, the flags will be preserved.
	Therefore, the problem remains \NP-hard in higher dimensions.
\end{proof}

\section{Localization process}
In this section, we discuss the details of utilizing the grouping function.
First, we describe an algorithm to localize two collinear groups with respect to each other.
Then, we use the same idea to localize two coplanar groups with respect to each other.
Finally, we show that our algorithm is able to perform localization easily where the quadrilateration algorithm fails.

For the sake of completeness, we give the quadrilateration algorithm in Algorithm~\ref{alg:quadrialteration}.
We would like to point out that algorithms given in this section do not use any GPS, and therefore they can only produce a point formation isometric (i.e. up to sign ambiguity) to the original network in the best case.

\begin{algorithm}[htbp]
	\begin{algorithmic}[1]
		\Input{Wireless sensor network graph $G = (V, E)$}
		\Output{3D point formation $\mathbb{F}$ of $G$}
		\State $\mathbb{F} \gets [\ ]$ \label{quadInitPF}
		\State Locate a non-coplanar $K_4$ $\{i,j,k,l\} \subset V$ \label{seednodes}
		\State Assign positions to $i,j,k,l$ \label{assignpositions}
		\State $Q_{\textit{localized}} \gets [i,j,k,l]$ \label{quadInitQ}
		\While{$Q_{\textit{localized}}$ is not empty} \label{quadQnotEmpty}
		\State $u \gets \textit{dequeue}(Q_{\textit{localized}})$\label{quadDequeue}
		\State add $(u, u.\textit{pos})$ into $\mathbb{F}$ \label{quadAddIntoF}
		\For{all neighbors $v$ of $u$} \label{quadIterateOnNeighbors}
		\If{$(\texttt{++}v.\textit{Count}) \geq 4$} \label{quadCountge4}
		\If{$v$ can be localized in 3D} \label{canBeLocalized}
		\State $\textit{enqueue}(v,Q_{\textit{localized}})$ \label{quadAddIntoQ}
		\EndIf
		\EndIf
		\EndFor
		\EndWhile
		\State \Return $\mathbb{F}$ \label{quadReturnBest}
	\end{algorithmic}
	\caption{Quadrilateration algorithm}
	\label{alg:quadrialteration}
\end{algorithm}

Basically, the algorithm takes the network graph $G = (V, E)$ as the input, and assigns positions to each vertex in $v \in V$, based on the pairwise distances given by $E$.
We begin by initializing the point formation in Line~\ref{quadInitPF}.
In Line~\ref{seednodes}, we find a non-coplanar $K_4$ in $G$ to begin the localization.
The vertices of this complete subgraph are called \emph{the seeds}.
Since every pairwise distance is available, we can assign positions to these vertices in Line~\ref{assignpositions}.
In Line~\ref{quadInitQ}, we initialize a queue, called $Q_{\textit{localized}}$, that contains the localized vertices. Initially, it contains only the seeds.
In Line~\ref{quadQnotEmpty}, we start a loop, and continue until $Q_{\textit{localized}}$ is empty.
In the loop, we first dequeue the first node in Line~\ref{quadDequeue}, and then store its position into the point formation $\mathbb{F}$ in Line~\ref{quadAddIntoF}.
Now that we have a localized vertex $u$, we iterate all its neighbors and broadcast the position of $u$.
In Line~\ref{quadCountge4}, we use a variable $v.\textit{Count}$, which indicates the number of localized neighbors of $v$.
Thus, if some neighbor $v$ of $u$ has at least $4$ localized neighbors, we check if it can be localized with respect to these neighbors in Line~\ref{canBeLocalized}.
Remember that if the localized neighbors are coplanar, then we cannot fix the position of $v$.
If $v$ can be localized, then we add it to $Q_{\textit{localized}}$, and continue the algorithm until the queue is empty.
Finally, we return the point formation in Line~\ref{quadReturnBest}.

\textbf{Complexity analysis.} 
In a network graph with $n$ vertices and $m$ edges, Algorithm~\ref{alg:quadrialteration} runs in $\OO(n^4m)$ time in the worst case.

Notice that even though the network graph admits a proper ordering to run the algorithm, the first $K_4$ we find in Line~\ref{seednodes} might not be the correct vertices to start \cite{theory}.
As four distance measurements from four vertices in general position are sufficient to localize a node, $\OO(m)$ time is spent in each phase by using the queue data structure keeping the track of localized nodes.
This can be accomplished by keeping the track of the number of localized neighbors of every node and inserting a node when the count hits the score of three.
An effective three is the smallest count $\geq 4$ with three non-coplanar neighbors.
When we reach the count $4$ for the first time, we check coplanarity.
From then on, every new increment checks only the new one with any two previous neighbors.
Therefore, it terminates after at most $\OO(n^4m)$ steps.

As seen above, the worst-case analysis gives us quartic running time.
But notice that when the sensor nodes are deployed to form coplanar groups in 3D, and the communication between groups are held by a limited number of nodes (i.e. the number of interplanar edges are low), most of the neighbors of an unlocalized node will be coplanar.
On the other hand, in the algorithm that we describe to utilize grouping function, this type of deployment is an advantage.
Thus, it is safe to say that these two algorithms complete each other, in terms of their functionalities with respect to sparsity.

While describing our algorithms, we use the following data structure for a vertex of the network graph.
A sensor node $s_u$, represented by a vertex $u$ in the network graph has three positions for each dimension. 
Namely, $\textit{pos}_1$, $\textit{pos}_2$, and $\textit{pos}_3$.
$u.\textit{pos}_1$ is a number that indicates the relative position of $u$ with respect to other vertices in the same collinear group.
Similarly $u.\textit{pos}_2$ is the position of $u$ with respect to the other vertices in the same coplanar group.
Finally, $u.\textit{pos}_3$ is the actual position of $u$ in $\mathbb{R}^3$.

\subsection{Localizing the nodes of a collinear group}
In this section, we describe a linear-time algorithm to localize the nodes in a given collinear group ambient in 2D.
The induced graph of a collinear group is called a \textit{unit interval graph} since all the vertices are considered to be in 1D.
The algorithm takes an induced subgraph of the network graph as the input, and outputs the embedding of that subgraph onto a straight line.
For the sake of simplicity, we embed the vertices onto $x$-axis.
While describing the algorithm, we denote the subgraph as $G' = (V', E')$.
We also assume that $G'$ is a connected graph.

If there are no cycles in a collinear group, then we can assign position to each vertex in linear time.
The algorithm to achieve this is pretty straightforward.

\begin{algorithm}[htbp]
	\begin{algorithmic}[1]
		\Input{Simple path $P = (v_1, v_2, \dots, v_k)$ of size $k$}
		\Output{1D point formation $\mathbb{F}$ of $P$}
		\State $v_1.\textit{pos}_1 \gets 0$
		\For{$i = 1 \to k$}
		\State $u \gets v_i$
		\State $v.\textit{pos} \gets w(uv)$
		\State Mark $u$ as localized
		\State Add $u.\textit{pos}_1$ into $\mathbb{F}$
		\EndFor
		\State \Return $\mathbb{F}$			
	\end{algorithmic}
	\caption{Localization algorithm for a collinear group with no cycles}
	\label{alg:intervalgraph}
\end{algorithm}

Algorithm~\ref{alg:intervalgraph} runs in $\OO(n)$ time where $n$ is the number of vertices in a collinear group.
We simply process all the vertices once and assign positions until there are no vertices left.
This algorithm is a strong indication that the information on grouping function reduces the number of computations dramatically.

In case where the collinear group contains cycles, we can use the algorithm proposed by Brandst{\"a}dt et al. \cite{linearhamiltonian} to find the Hamiltonian path in (claw,net)-free graphs. A (claw,net)-free graph does not contain both of the graphs given in Figure~\ref{fig:claw} as an induced subgraph.

\begin{figure}[htbp]
	\centering
	\begin{tikzpicture}[scale=1.1]
	\tikzstyle{every node}=[draw, fill=white, shape=circle, radius=3pt,inner sep=1.5pt];
	\node (0) at (0,0) {$a$};
	\node (1) at (0,1) {$b$};
	\node (2) at (1,-0.5) {$d$};
	\node (3) at (-1,-0.5) {$c$};
	\draw (0)--(1);
	\draw (0)--(2);
	\draw (0)--(3);
	
	\node (4) at (4,0.2) {$a$};
	\node (5) at (3.7,-0.3) {$b$};
	\node (6) at (4.3,-0.3) {$c$};
	\draw (4)--(5)--(6)--(4);
	
	\node (7) at (4, 1) {$x$};
	\node (8) at (3, -0.5) {$y$};
	\node (9) at (5, -0.5) {$z$};
	\draw (4)--(7);
	\draw (5)--(8);
	\draw (6)--(9);
	\end{tikzpicture}
	\caption{A claw (on the left) and a net (on the right)}
	\label{fig:claw}
\end{figure}
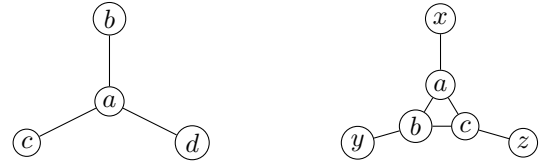

Let us first show that interval graphs are (claw,net)-free graphs.

\begin{lem}
	Every unit interval graph is claw-free.
\end{lem}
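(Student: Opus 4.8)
The plan is to argue by contradiction using the one-dimensional point representation that a unit interval graph admits. Since the vertices of a collinear group are placed on the $x$-axis with two vertices adjacent precisely when their Euclidean distance is at most one unit, I may assign to each vertex $v$ a coordinate $p(v) \in \mathbb{R}$ such that $uv \in E'$ if and only if $|p(u) - p(v)| \le 1$. This threshold model is exactly the defining property of the induced graph of a collinear group as set up earlier in the paper, so no separate characterization of unit interval graphs is required; it coincides with the standard representation of each unit interval by its left endpoint.

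Suppose, for contradiction, that such a graph contains an induced claw, with center $a$ adjacent to three mutually non-adjacent leaves $b$, $c$, $d$. Because $a$ is adjacent to each leaf, all three coordinates $p(b)$, $p(c)$, $p(d)$ lie in the closed interval $[\,p(a)-1,\ p(a)+1\,]$, whose length is $2$.

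The key step is a pigeonhole argument on this length-$2$ window. I would split it at its midpoint $p(a)$ into the two closed subintervals $[\,p(a)-1,\ p(a)\,]$ and $[\,p(a),\ p(a)+1\,]$, each of length one. Since the three leaves contribute three points distributed among only two subintervals, two of them must fall into the same subinterval; any two points lying in a closed interval of length one are at distance at most one, so the corresponding leaves are adjacent. This contradicts the assumption that $b$, $c$, $d$ are pairwise non-adjacent, and hence no induced claw can occur.

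I do not expect a genuine obstacle here: the argument is elementary once the representation is in place. The only points requiring a word of care are (i) stating explicitly that unit-interval adjacency is equivalent to the threshold condition $|p(u)-p(v)|\le 1$, which is what makes the geometric window argument legitimate, and (ii) the harmless boundary case in which some leaf sits exactly at the midpoint $p(a)$ — since both subintervals are closed, such a point may be assigned to either one without affecting the count. Once these are dispatched, claw-freeness follows immediately from the observation that a length-two interval centred at $a$ simply cannot host three pairwise far-apart points.
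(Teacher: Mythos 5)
Your proof is correct and follows essentially the same route as the paper's: both arguments confine the three leaves to the length-two window around the center $a$ and conclude that two leaves lying on the same side of $a$ must be within unit distance, hence adjacent. Your pigeonhole phrasing is a slightly cleaner, more symmetric packaging of the paper's argument (which fixes the order $b,a,c$ and then shows $d$ must be adjacent to $b$ or $c$), but the underlying idea is identical.
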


\begin{proof}
	Suppose that $K(a;b,c,d)$ is an induced claw, as shown in Figure~\ref{fig:claw}.
	Since $b$ and $c$ are not neighbors, that means the Euclidean distance between them is greater than $1$ unit.
	And also, both of them are neighbors with $a$, which means the Euclidean distance between them is less than $2$ units.
	If we embed $b$, $a$, $c$ onto a straight line respectively, then $a$ must be between $b$ and $c$.
	
	Wherever we place $d$, within one unit distance from $a$, it must be a neighbor of either $b$ or $c$.
	If $d$ is neighbors with neither, then the Euclidean distance between $d$ and $a$ must be greater than one unit, contradicting the unit interval property.
\end{proof}

\begin{lem}
	Every unit interval graph is net-free.
\end{lem}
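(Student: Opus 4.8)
The plan is to follow the same strategy as the previous lemma: assume for contradiction that $G$ contains an induced net, realize its six vertices as points on a line according to the unit interval model, and derive a geometric impossibility. Recall that in this model two vertices are adjacent exactly when their Euclidean distance is strictly less than one unit, and non-adjacent when it is at least one unit. Label the central triangle of the net by $a,b,c$ (pairwise adjacent) and the three pendant vertices by $x,y,z$, where $x$ is adjacent only to $a$, $y$ only to $b$, and $z$ only to $c$, as in Figure~\ref{fig:claw}.

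First I would fix the positions of the triangle on the line. Since $a,b,c$ are pairwise adjacent, all three pairwise distances are below one unit, so after relabelling I may assume their positions satisfy $a \le b \le c$ with $c - a < 1$; in particular $b$ is the middle vertex and $a,b,c$ all lie in an interval of length less than one. The crux of the argument is then to look only at the middle vertex $b$ together with its private pendant $y$. By definition of the net, $y$ is adjacent to $b$ but not to $a$ or $c$, which translates into the three conditions $|y-b|<1$, $|y-a|\ge 1$, and $|y-c|\ge 1$.

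Next I would show these conditions cannot be met simultaneously. Being within one unit of $b$ confines $y$ to the open interval $(b-1,b+1)$. Since $a\le b$ and $c-a<1$, the only way to also be at distance at least one from $a$ inside this interval is to have $y \ge a+1$; symmetrically, being at distance at least one from $c$ while staying within one unit of $b$ forces $y \le c-1$. Combining the two inequalities yields $a+1 \le y \le c-1$, hence $c-a \ge 2$, contradicting $c-a<1$. Therefore no placement of $y$ exists and the induced net cannot be realized.

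The main obstacle, and the only place requiring care, is the bookkeeping of strict versus non-strict inequalities together with correctly identifying which triangle vertex plays the role of the ``middle'' one. Note that the pendants $x$ and $z$, as well as the non-adjacencies among $x,y,z$, are not even needed: the obstruction already arises from the middle vertex's pendant, because that vertex is sandwiched between its two triangle neighbours, leaving no room on the line for a point that is close to it yet far from both of them. Once the ordering $a\le b\le c$ is fixed, the rest is the short interval computation sketched above.
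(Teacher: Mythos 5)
Your proof is correct and follows essentially the same approach as the paper's: realize the triangle $a,b,c$ in a unit-length interval, identify the middle triangle vertex, and show its private pendant cannot be placed close to it yet far from both of its triangle neighbours. The only differences are cosmetic — you obtain the middle vertex by a symmetry/relabelling argument where the paper uses the pendants $x$ and $z$ to force $b$ into the middle, and your explicit inequality chain ($a+1 \le y \le c-1$ implying $c-a \ge 2$) makes rigorous the step the paper states informally.
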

\begin{proof}
	Suppose that $N(a, b, c; x, y, z)$ is an induced net, as shown in Figure~\ref{fig:claw}.
	It is clear that $a$, $b$ and $c$ must be inside an interval of at most $1$ unit.
	Note that $a,x$ and $c,z$ are pairwise disjoint neighbors, and moreover neither $x$ nor $z$ is a neighbor of $b$.
	This means that $b$ must be between $a$ and $c$.
	
	Wherever we place $y$, within one unit distance from $b$, it must be a neighbor of either $a$ or $c$ (also, at least one of $x$, $z$).
	If $y$ is a neighbor with neither, then the Euclidean distance between $b$ and $y$ must be greater than one unit, contradicting the unit interval property.
\end{proof}

\begin{cor}
	Every unit interval graph is (claw,net)-free.
\end{cor}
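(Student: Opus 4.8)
The plan is simply to assemble the two lemmas just proved. By definition, a graph is (claw,net)-free exactly when it contains neither the claw nor the net of Figure~\ref{fig:claw} as an induced subgraph; thus the property factors as the conjunction of claw-freeness and net-freeness, and no new geometric argument is required.

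First I would invoke the lemma asserting that every unit interval graph is claw-free, which rules out any induced $K(a;b,c,d)$. Next I would invoke the lemma asserting that every unit interval graph is net-free, which rules out any induced $N(a,b,c;x,y,z)$. Taking these together, a unit interval graph admits no induced subgraph isomorphic to either forbidden pattern, which is exactly the definition of being (claw,net)-free, establishing the corollary.

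There is essentially no obstacle here, since all of the geometric content was already spent in the two lemmas, where the one-dimensional unit-distance embedding forced the ordering of the central vertices and the placement of their neighbours that produced the contradictions. The only point to watch is the logical reading of the hyphenated adjective: (claw,net)-free is a conjunction of two separate exclusions rather than a single combined forbidden pattern, so the result follows from the lemmas by a one-line conjunction rather than any further casework.
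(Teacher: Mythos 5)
Your proof is correct and takes essentially the same route as the paper: the corollary is stated there without any separate argument precisely because it is the immediate conjunction of the two preceding lemmas (claw-freeness and net-freeness of unit interval graphs). Your reading of ``(claw,net)-free'' as excluding each pattern separately matches the paper's intended definition, so nothing further is needed.
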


When a Hamiltonian path is found, Algorithm~\ref{alg:intervalgraph} can use the induced path as input, and assign positions to the vertices in linear time.

\begin{thm}[\cite{linearhamiltonian}] \label{thm:hamiltonian}
	Every (claw,net)-free graph $G$ has a Hamiltonian path and such a path can be found in $\OO(n + m)$ time.
\end{thm}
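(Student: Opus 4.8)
The plan is to prove the statement in two layers: first the \emph{existence} of a Hamiltonian path in every connected (claw,net)-free graph, and then the \emph{constructive, linear-time} refinement. Since a graph with more than one connected component has no Hamiltonian path at all, I would first reduce to the connected case (which the collinear groups of the paper satisfy by the standing assumption that $G'$ is connected). For existence I would run the classical extremal argument on a longest path: let $P = v_0 v_1 \cdots v_p$ be a longest path in $G$ and suppose, for contradiction, that $P$ is not spanning. Because $P$ is longest, no vertex off $P$ is adjacent to an endpoint $v_0$ or $v_p$, otherwise we could prepend or append it. Connectivity then yields a vertex $w \notin V(P)$ adjacent to some interior vertex of $P$; I would fix $v_i$ to be such a neighbor of $w$ with $i$ \emph{minimum}, which automatically gives $w \not\sim v_0, \ldots, v_{i-1}$ and in particular $w \not\sim v_{i-1}$.

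Next I would extract local structure from the two forbidden subgraphs. Looking at $v_i$ together with its neighbors $v_{i-1}, v_{i+1}, w$: if these three were pairwise non-adjacent they would induce a claw centered at $v_i$, so at least one pair is adjacent. The edge $w v_{i-1}$ is excluded by minimality of $i$, and the edge $w v_{i+1}$ is excluded because the detour $v_0 \cdots v_i\, w\, v_{i+1} \cdots v_p$ would be a longer path. Hence claw-freeness forces $v_{i-1} \sim v_{i+1}$, so $\{v_{i-1}, v_i, v_{i+1}\}$ is a triangle carrying $w$ as a pendant off $v_i$ with $w \not\sim v_{i-1}, v_{i+1}$.

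The decisive step, and the one I expect to be the main obstacle, is to promote this triangle-plus-pendant into an induced net and contradict net-freeness. The candidate net is the triangle $\{v_{i-1}, v_i, v_{i+1}\}$ with pendants $v_{i-2}$, $w$, $v_{i+2}$ attached at $v_{i-1}$, $v_i$, $v_{i+1}$ respectively. Certifying that this is \emph{induced} means ruling out every spurious chord ($v_{i-2} \sim v_i$, $v_{i-2} \sim v_{i+1}$, $v_{i+2} \sim v_{i-1}$, $v_{i-2} \sim v_{i+2}$, $w \sim v_{i\pm 2}$, and so on); in each case I would exhibit either a strictly longer path or a smaller forbidden configuration. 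The genuinely delicate cases are the boundary ones where $v_{i-2}$ or $v_{i+2}$ fails to exist: for instance when $i = 1$ the chord $v_0 \sim v_2$ already lets me reroute as $w\, v_1\, v_0\, v_2 \cdots v_p$, a longer path, so that case collapses directly. Getting this case split airtight, and choosing the attachment index extremally so that the net always sits in the interior, is exactly where net-freeness earns its keep and is the real work of the argument.

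Finally, for the $\OO(n+m)$ bound I would convert the extremal argument into an incremental construction rather than recomputing a longest path each time. The idea is to grow a path and, whenever an endpoint cannot be extended, use the local triangle/pendant structure uncovered above to perform a constant-work rotation that either lengthens the path or absorbs a new vertex, charging each adjacency examined to an incident edge of $G$; since every vertex and edge is then touched a bounded number of times, the total cost is linear. For the complete algorithmic specification and the formal linear-time accounting I would defer to Brandst\"adt et al.~\cite{linearhamiltonian}, from which the theorem is quoted.
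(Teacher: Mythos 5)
This theorem is not proved in the paper at all: it is an imported result, stated with the citation to Brandst\"adt et al.\ \cite{linearhamiltonian} precisely because the paper relies on that reference for both the existence claim and the $\OO(n+m)$ bound. So there is no internal proof to compare yours against; the only question is whether your sketch would stand on its own as a replacement for the citation.

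It would not, and the hole is exactly where you flagged it. Your opening is the classical longest-path argument: endpoints of a longest path $P = v_0 v_1 \cdots v_p$ have no neighbors off $P$, a vertex $w \notin V(P)$ attaches to an interior $v_i$ with $i$ minimum, and claw-freeness at $v_i$ (using $w \not\sim v_{i-1}$ by minimality and $w \not\sim v_{i+1}$ by maximality of $P$) forces the chord $v_{i-1}v_{i+1}$. All of that is correct. But the ``decisive step'' of promoting the triangle $\{v_{i-1},v_i,v_{i+1}\}$ with pendant $w$ into an induced net does not go through by longer-path rerouting alone. Take the spurious chord $v_{i-2} \sim v_{i+1}$: the natural rerouting that absorbs $w$, namely $w\,v_i\,v_{i-1}\,v_{i+1}\,v_{i-2}\,v_{i-3}\cdots v_0$, necessarily abandons the tail $v_{i+2},\dots,v_p$, so when that tail is nonempty (the generic interior case) the rerouted path has $i+3$ vertices against $p+1 \geq i+3$ and is \emph{not} longer; nor does the chord create a claw or net by itself. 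The same difficulty arises for $v_{i-2} \sim v_i$, which by another application of claw-freeness breeds further chords rather than a contradiction. So the local ``triangle plus pendant $\Rightarrow$ induced net'' promotion is not a case check but the genuinely hard part of the Duffus--Gould--Jacobson-style argument, and your proposal leaves it open. The linear-time half is deferred wholesale to \cite{linearhamiltonian}, as you say explicitly; the closing suggestion of constant-work rotations is a hope rather than an algorithm. In effect your proposal, like the paper, ultimately rests on the citation — which is a legitimate way to use a published theorem, but it is not a proof. One point in your favor: you correctly observe that connectivity is necessary (a disconnected graph has no Hamiltonian path), a hypothesis the paper's statement of the theorem leaves implicit.
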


By Theorem~\ref{thm:hamiltonian}, we can use the algorithm given in \cite{linearhamiltonian}, and find a Hamiltonian path in a collinear group.
Then, we can simply give the found path as the input of Algorithm~\ref{alg:intervalgraph}, and obtain the relative positions of vertices in a collinear group in $\OO(m + n)$ running time.

\subsection{Localizing collinear groups with respect to each other}
We have proposed a method to localize the vertices of a collinear group in 1D.
Now, we describe an algorithm to find the position of two hyperplanar groups in their ambient space.
This algorithm is described in general for $\mathbb{R}^d$, and therefore is applicable to both 2D and 3D.

It fixes the position of $d$ vertices in a hyperplanar group, and then uses these positions to determine the equation of the hyperplane on which the vertices of that group lies.
Finally, the positions of every other vertex in the same group can be found in the same ambient space by applying a transformation.

The hyperplanar group of a vertex $x$ is denoted by $h(x)$, and we can retrieve the set of vertices $U$ where for each pair $u,v \in U$, $h(u) = h(v)$ by taking the inverse of $h(\cdot)$ (denoted by $h^{-1}(\cdot)$).
Note that we can use $h$ for both localization of collinear groups or coplanar groups without loss of generality.
However, we operate only in one dimenstion i.e. we do not localize a collinear group with respect to a coplanar group, and vice versa.
Similarly, when we write $u.\textit{pos}_d$, it is the relative position of $u$ in $d$ dimensional space.

The equation of the hyperplane that every vertex in the group $i$ lies is denoted by $\textit{eqn}(i)$.
We say that a hyperplanar group $i$ is adjacent to $j$, if there exists at least one interlinear (or interplanar) edge between a vertex in $i$, and a vertex in $j$.

\begin{algorithm}[htbp]
	\begin{algorithmic}[1]
		\Input{Network graph $G = (V, E)$, hyperplanar grouping function $h$}
		\Output{Point formation $\mathbb{F}$ of $G$ in $\mathbb{R}^d$}
		\State Find $v.\textit{pos}_1$ for each $v \in V$
		\State Pick a group $i$
		\State Mark the vertices in $i$ as localized
		\For{each group $j$ adjacent to $i$}
		\For{each $u \in h^{-1}(j)$}
		\If{$u$ has $d$ neighbors in $i$}
		\State Determine $u.\textit{pos}_d$
		\State Mark $u$ as support vertex
		\If{There are $d$ support vertices}
		\State Determine $\textit{eqn}(j)$
		\State Find the transformation matrix
		\State Apply the transformation to $h^{-1}(j)$
		\State Add the positions into $\mathbb{F}$
		\State Notify the neighbors of localized vertices
		\State \textbf{go to} Line~\ref{isLocalizable}
		\EndIf
		\EndIf
		\EndFor
		\EndFor
		\If{No collinear groups are localized} \label{isLocalizable}
		\State \Return{} {\sc not\_localizable}
		\EndIf
		
		\For{each non-localized group $j$}
		\For{each $u \in h^{-1}(j)$}
		\If{$u$ was notified $d+1$ times}
		\State Determine $u.\textit{pos}_d$
		\State Mark $u$ as support vertex
		\If{There are $d$ support vertices}
		\State Determine $\textit{eqn}(j)$
		\State Find the transformation matrix
		\State Apply the transformation to $h^{-1}(j)$
		\State Add the positions into $\mathbb{F}$
		\State \textbf{continue} the iteration
		\EndIf
		\EndIf
		\EndFor
		\EndFor
		\State \Return $\mathbb{F}$
	\end{algorithmic}
	\caption{Localizing hyperplanar groups with respect to each other}
	\label{alg:hyperplanarlocalization}
\end{algorithm}

\textbf{Complexity analysis.}
In a network graph with $n$ vertices, $m$ edges, $k$ coplanar groups, and $r$ collinear groups, Algorithm~\ref{alg:hyperplanarlocalization} runs in $\OO(k^2m) + \OO(r^2m) + O(n+m)$ time in the worst case.

We spend $\OO(n+m)$ time to find the 1D positions of the vertices using Algorithm~\ref{alg:intervalgraph} \cite{linearhamiltonian}.
Note that we iterate on each collinear group only once, therefore the total running time is not affected by the number of the collinear groups.
For each collinear group, we iterate all the other collinear groups to find proper edges to localize the support vertices. 
At each iteration, we also iterate the interlinear edges. Therefore, this phase takes $\OO(r^2m)$ time.
Same reasoning is also true for $k$ coplanar groups.
Since these operations are not nested, the overall asymptotic running time of Algorithm~\ref{alg:hyperplanarlocalization} is $\OO(k^2m) + \OO(r^2m) + O(n+m)$.

\section{Conclusion}
This paper describes an energy-efficient algorithm to localize a wireless sensor network.
In general, localization problem is an \NP-hard problem. 
However, we have utilized the observation that in some cases the sensor nodes form hyperplanar groups.
Using this information, we have proposed methods to localize a network, whose localization will consume much more time using
conventional trilateration and quadrilateration algorithms.

We have defined a function $h$, called hyperplanar grouping function, which maps the vertices to hyperplanar groups.
Our algorithms work under the assumption of availability of the hyperplanar grouping function, since we proved that determining that function is an \NP-hard problem.

Instead of iterating each vertex multiple times, our proposed algorithm breaks down the localization problem into lower-dimensional (and also smaller) problems, and thus reducing the quartic running time complexity with respect to the number of vertices.
Since the number of hyperplanar groups is assumed to be much less than the number of vertices, the quadratic running time on the number of groups amortize itself as more crowded the groups get.

In addition, to the best of our knowledge, this paper is the first in the literature that proves unit interval graphs are (claw,net)-free graphs.
Moreover, using this information, we have described a linear-time algorithm to embed a given unit interval graph on a straight line.

\balance
\bibliographystyle{abbrv}
\bibliography{ref}  

\end{document}